\documentclass[letterpaper]{IEEEtran}
\usepackage[explicit]{titlesec}
\usepackage{graphicx}
\usepackage{verbatim}
\usepackage{epstopdf}
\usepackage{setspace}
\usepackage{subfigure} 
\usepackage{algorithmic}
\usepackage{amsmath}
\usepackage{amsthm}
\usepackage{mathrsfs}
\usepackage{extarrows}
\usepackage{cite}
\newtheorem{theorem}{Theorem}

\usepackage{bbm}
\usepackage{flushend}
\usepackage[bookmarks=false, colorlinks=true, citecolor=black]{hyperref}
\usepackage{amssymb}

\usepackage{colortbl}

\newcommand{\RNum}[1]{\uppercase\expandafter{\romannumeral #1\relax}}
\newtheorem{lemma}{Lemma}

\newtheorem{corollary}{Corollary}
\usepackage{stfloats}

\usepackage{color, soul}
\usepackage{booktabs}
\usepackage{tikz,xcolor}

\usepackage{balance}
\definecolor{lime}{HTML}{A6CE39}
\usepackage[left=0.68in,right=0.68in,top=0.67in,bottom=0.99in]{geometry}

\titlespacing{\section}{0pt}{1.2ex plus .0ex minus .0ex}{.3ex plus .0ex}
\titlespacing{\subsection}{0pt}{1.2ex plus .0ex minus .0ex}{.3ex plus .0ex}

\makeatletter
\DeclareRobustCommand{\orcidicon}{%
	\begin{tikzpicture}
	\draw[lime, fill=lime] (0,0) 
	circle [radius=0.16] 
	node[white] {{\fontfamily{qag}\selectfont \tiny ID}};    \draw[white, fill=white] (-0.0625,0.095) 
	circle [radius=0.007];    \end{tikzpicture}
	\hspace{-2mm}}
\foreach \x in {A, ..., Z}{%
	\expandafter\xdef\csname orcid\x\endcsname{\noexpand\href{https://orcid.org/\csname orcidauthor\x\endcsname}{\noexpand\orcidicon}}
}

\newcommand*\bigcdot{\mathpalette\bigcdot@{.5}}
\newcommand*\bigcdot@[2]{\mathbin{\vcenter{\hbox{\scalebox{#2}{$\m@th#1\bullet$}}}}}
\makeatother
\DeclareRobustCommand{\orcidicon}{%
	\begin{tikzpicture}
	\draw[lime, fill=lime] (0,0) 
	circle [radius=0.16] 
	node[white] {{\fontfamily{qag}\selectfont \tiny ID}};    \draw[white, fill=white] (-0.0625,0.095)
	circle [radius=0.007];    \end{tikzpicture}
	\hspace{-2mm}}
\foreach \x in {A, ..., Z}{%
	\expandafter\xdef\csname orcid\x\endcsname{\noexpand\href{https://orcid.org/\csname orcidauthor\x\endcsname}{\noexpand\orcidicon}}
}

\usepackage[linesnumbered,ruled,vlined]{algorithm2e}

\IEEEoverridecommandlockouts

\hyphenation{op-tical net-works semi-conduc-tor}
\begin{document}
	\title{Error Floor of ML-Decoded Spinal Codes in the Finite Blocklength Regime}
	\author{Aimin Li\orcidA{}, Shaohua Wu\orcidB{}, \emph{Member, IEEE}, Xiaomeng Chen\orcidC{}, and Sumei Sun\orcidF{}, \emph{Fellow, IEEE\vspace{0em}}
		
		 

		\thanks{
			This work has been supported in part by the Guangdong Basic and Applied Basic Research Foundation under Grant no. 2022B1515120002, and in part by the National Natural Science Foundation of China under Grant no. 61871147, and in part by the Major Key Project of PCL under Grant no. PCL2024A01.  (\textit{Corresponding author: Shaohua Wu.})
			
		A. Li and X. Chen are with the Department of Electronic Engineering, Harbin Institute of Technology (Shenzhen), Shenzhen, China 518055. E-mail: liaimin@stu.hit.edu.cn; 23s052026@stu.hit.edu.cn. 
		
		S. Wu is with the Department of Electronic Engineering, Harbin Institute of Technology (Shenzhen), Shenzhen, China 518055, and also with the Peng Cheng Laboratory, 518055, China. E-mail:  hitwush@hit.edu.cn. 
		
		S. Sun is with the Institute for Infocomm Research, Agency for Science, Technology and Research (A*STAR), Singapore 138632. Email: sunsm@i2r.a-star.edu.sg.
	
	Copyright (c) 2025 IEEE. Personal use of this material is permitted. However, permission to use this material for any other purposes must be obtained from the IEEE by sending a request to pubs-permissions@ieee.org. 
		}
	}
	
	\maketitle
	\allowdisplaybreaks

	\begin{abstract}
		Spinal codes are a new family of \textit{capacity-achieving} rateless codes that have been shown to achieve better rate performance compared to Raptor codes, Strider codes, and \textit{rateless} Low-Density Parity-Check (LDPC) codes. This correspondence addresses the performance limitations of Spinal codes in the finite blocklength regime, uncovering an error floor phenomenon at high {Signal-to-Noise Ratios (SNRs)} under Maximum Likelihood (ML) Decoding. We develop an analytical expression to approximate the error floor and devise SNR thresholds at which the error floor initiates. Numerical results across Additive White Gaussian Noise (AWGN), Rayleigh, and Nakagami-m fading channels verify the accuracy of our analysis. The analysis and numerical results also show that transmitting more passes of symbols can lower the error floor but do not affect the SNR threshold, providing insights into the performance target, the operating SNR region, and the code design.
	\end{abstract}
	\begin{IEEEkeywords}
		Rateless Codes, Spinal Codes, Error Floor, AWGN channel, Fading Channel.
	\end{IEEEkeywords}
	
	\IEEEpeerreviewmaketitle
	
	\section{Introduction}\label{sectionI}
	\IEEEPARstart{S}pinal codes, first introduced in 2011 \cite{2011Spinal}, are the first \textit{rateless} codes that have been theoretically proven to achieve the \textit{Shannon capacity} over the \textit{Binary Symmetric Channel} (BSC) and the AWGN channel \cite{balakrishnan2012randomizing}. Experimentally, Spinal codes have been shown to achieve higher rate compared to Raptor codes \cite{Raptor}, Strider codes \cite{Strider}, and \textit{rateless} Low-Density Parity-Check (LDPC) codes \cite{LDPC} across a wide range of channel conditions and message sizes \cite{2012Spinal}. 
	
	Due to Spinal codes' \textit{rateless}, \textit{capacity-achieving} nature, and superior rate performance under short message lengths, Spinal codes have received extensive attention in the coding theory community. Key areas of investigation include the design of coding structures \cite{UEPspinal,yang2016two,compressive2019,li2021spinal}, development of high-efficiency decoding mechanisms \cite{8653979,yang2015low,DBLP:conf/iwcmc/BianLKD19,DBLP:conf/iccip/ZhangZB23}, applications in quantum information processing \cite{DBLP:journals/qip/WenLMLYH20}, optimizations of punctured codes \cite{xu2019optimized,li2020spinal, meng2022analysis}, and Polar-Spinal concatenation improved codes \cite{xu2018high,liang2020raptor,dong2017towards,cao2022polar}. These studies offer deeper insights into Spinal codes. \textcolor{black}{The excellent rate performance for short message length and the channel-adaptive rateless nature also make Spinal codes well-suited for deployment in 6G non-terrestrial networks (NTN) \cite{vaezi2022cellular}, which are expected to support a wide range of vehicle-to-vehicle (V2V) applications such as UAV networks \cite{wang2019multiple,pang2018trust}, deep space explorations \cite{wu2022cs}, Low Earth Orbit (LEO) satellite communications \cite{bian2018high}, and satellite laser communications \cite{iannucci2012no}.}
	
	The theoretical performance analysis of Spinal codes, however, is still in an early stage (See Table \ref{1} for an overview). In \cite{balakrishnan2012randomizing}, the \textit{asymptotic} performance of Spinal codes was investigated, proving that Spinal codes are \textit{capacity-achieving}. In \cite{UEPspinal} and \cite{8006920}, under the \textcolor{black}{Finite Block-Length (FBL)} regime, Yang, \textit{et al.} derived the bounds on the \textcolor{black}{Block Error Rate (BLER)} of Spinal codes over the Binary Erasure Channel (BEC) and the AWGN channel, respectively. In \cite{li2021new}, the upper bound on the BLER of Spinal codes over the AWGN channel was further tightened. In \cite{li2021spinal}, the upper bound on Spinal codes over the \textcolor{black}{Rayleigh} fading channel without \textcolor{black}{Channel State Information (CSI)} was derived. Then, in \cite{chen2023tight}, tight upper bounds over \textcolor{black}{Rayleigh, Rician, and Nakagami-m} fading channels under perfect CSI were explicitly derived. \cite{li2024new} further extended the result in \cite{chen2023tight} by deriving a \textit{unified} FBL BLER upper bound for general channels, achieving consistent tightness across a wide range of channel conditions.
	
	\begin{table}[t]\label{1}
		\centering
		\caption{A Summary of Theoretical Analysis of Spinal Codes}
		\begin{tabular}{cc}
			\toprule
			\textbf{Contribution} & \textbf{References}  \\
			\midrule
			Asymptotic Analysis & \cite{balakrishnan2012randomizing}  \\
			FBL BLER Bound over AWGN & \cite{UEPspinal,li2021new}  \\
			FBL BLER Bound over BEC & \cite{8006920} \\
			FBL BLER Bound over Fading Channels & \cite{meng2022analysis,li2021spinal,chen2023tight} \\
			Error Floor Analysis & \textbf{This work} \\
			\bottomrule
		\end{tabular}
	\end{table}
	
	While existing analyses have achieved precise approximations of Spinal codes' performance, the error floor phenomenon—where increasing SNR does not yield further improvements in error probability—has not been rigorously studied. Such a phenomenon has been documented in many other typical channel codes like Polar codes \cite{DBLP:journals/tit/MondelliHU16,DBLP:journals/tcom/EslamiP13,DBLP:conf/vtc/JuPCK21} and LDPC codes \cite{DBLP:journals/tcom/PsotaKP12,DBLP:journals/tit/AsvadiBA11,DBLP:journals/tcom/NeshaastegaranB21,DBLP:journals/icl/TaoCW22}, with several efforts aimed at lowering the error floor \cite{DBLP:conf/vtc/JuPCK21,DBLP:journals/tit/AsvadiBA11,DBLP:journals/icl/TaoCW22}.

	Motivated by this gap, this correspondence reveals that Spinal codes suffer from an error floor. We present an explicit expression for the error floor and analyze the SNR thresholds that trigger the error floors across AWGN, \textcolor{black}{Nakagami-m fading channels, and Rayleigh fading channels}. The analysis and numerical results indicate that transmitting additional passes of symbols can reduce the error floor but cannot change the SNR threshold, which offers insights into the SNR operating region, the performance target, and the design of codes.
	
	\section{Preliminaries of Spinal Codes}\label{section III}
	
	\subsection{Encoding of Spinal Codes}
	The encoding process of Spinal codes is shown in Fig. \ref{fig:spinal}:	

	\noindent 1) \textit{Segmentation}: Divide an $n$-bit message $\mathbf{M}$ into $k$-bit segments $\mathbf{m}_i \in \left\{0,1 \right\}^k$, where $i\in\{1,2,\cdots,n/k\}$.
	
	\noindent 2) \textit{Sequential Hashing}: The hash function sequentially generates $v$-bit spine values $\mathbf{s}_i \in {\{0,1\}}^v$, with
		\begin{equation} \label{eqhash}
		\mathbf{s}_i = h(\mathbf{s}_{i-1},\mathbf{m}_i),~i\in\{1,2,\cdots,n/k\},~\mathbf{s}_0 = \mathbf{0}^v \text{.} 
		\footnote{The initial spine value $\mathbf{s}_0$ is known to both the encoder and the decoder and is usually set as $\mathbf{s}_0=0$ without loss of generality.}
		\end{equation}
		\begin{figure}
			\centering
		\includegraphics[width=1\linewidth]{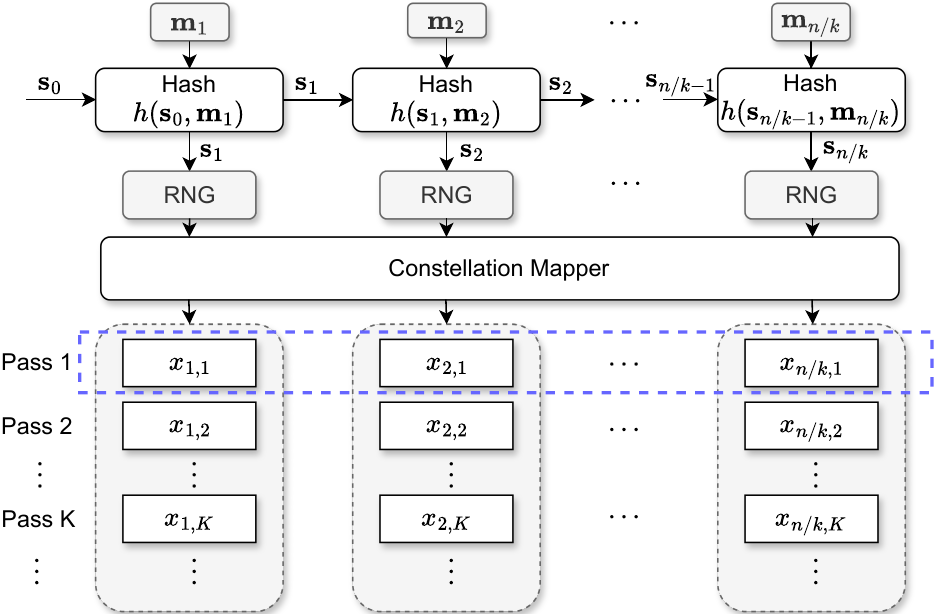}
			\caption{Encoding process of Spinal codes.}
			\label{fig:spinal}
			\vspace{-3mm}
		\end{figure}
\noindent 3) \textcolor{black}{\textit{Random Number Generator (RNG)}} and \textit{Constellation Mapping}: Each spine value $\mathbf{s}_i$ seeds an RNG to generate pseudo-random uniformly distributed symbols from $\{1,2,\cdots,2^c\}$, which is then mapped to the channel input set $\Psi$.
		\begin{equation}
		\mathrm{RNG:} ~\mathbf{s}_i \to \{{x}_{i,j}\}_{j \in \mathbb{N}^+}, x_{i,j}\in\Psi,
		\end{equation}
		where $j$ is the index of the pass\footnote{\textcolor{black}{Generally, the Spinal codes are transmitted pass by pass. The $j$-th pass of Spinal codes consists of $n/k$ coded symbols $\{x_{1,j},\cdots,x_{n/k,j}\}$. See Fig. \ref{fig:spinal} for the definition of a pass. }}.
	Fig. \ref{fig:mapping} illustrates two Quadrature Amplitude Modulation (QAM) examples of the constellation mapping set $\Psi$. 
		
	\subsection{ML Decoding of Spinal Codes}
	We consider ML decoding for Spinal codes in this correspondence. We denote $x_{i,j}(\mathbf{M})$ as the $j$-th pass symbol generated by $\mathbf{s}_i$ from $\mathbf{M}$. The received symbol at the receiver is denoted by $y_{i,j}=h_{i,j}\cdot x_{i,j}(\mathbf{M})+n_{i,j}$, where $h_{i,j}$ are independent and identically distributed (i.i.d.)  complex realization of the random variable $H$, which follows a distribution $f_H(h)$. The noise terms $n_{i,j}$ are i.i.d. complex Gaussian with $n_{i,j} \sim \mathcal{CN}(0,\sigma^2)$. We denote $L$ as the transmitted passes, and $\mathbf{M}^{\star}\in\{0,1\}^n$ as the output from the ML decoder. With the above notations, the ML decoding rule is given by
	\begin{equation} 
	\mathbf{M}^{\star}=\mathop{\arg\min}_{\mathbf{M}'\in\left\{0,1\right\}^n}\sum_{i=1}^{n/k}\sum_{j=1}^{L}({y}_{i,j}-{{x}}_{i,j}(\mathbf{{M}}'))^2.
	\end{equation}
	\section{Error Floor and SNR Threshold Analysis}\label{sectionIV}
	\subsection{Error Floor Analysis of Spinal Codes}
	The error floor analysis in this paper is built upon the BLER analysis in the FBL regime in our previous works~\cite{chen2023tight,li2024new}. In particular, we observed that the BLER does not vanish even as the AWGN noise variance $\sigma^2$ approaches zero, revealing the presence of a non-zero error floor. We begin by restating the BLER bound in the following lemma, whose proof can be found in~\cite{li2024new,chen2023tight}. This result serves as the foundation for the subsequent analysis of the error floor phenomenon.
	
	\begin{lemma}\label{the1} (\textcolor{black}{Restatement of \cite{li2024new,chen2023tight}})
		\label{coretheorem}
		Consider $(n,k,c,\Psi,L)$\footnote{For short-hand notations, we call Spinal codes with message length $n$, segmentation length $k$, modulation order $c$, channel input set $\Psi$, and transmitted passes $L$ as $(n,k,c,\Psi,L)$ Spinal codes.} Spinal codes transmitted over a complex fading channel with AWGN variance $\sigma^2$, the average BLER given perfect CSI under ML decoding for Spinal codes can be approximated by \textcolor{black}{a tight upper bound}
		\begin{equation}\label{eq4}
		\textcolor{black}{P_e \leq 1-\prod_{a=1}^{n/k} \left(1-\mathrm{min} \left\{ 1,\left(2^k-1\right)2^{n-ak} \cdot \mathscr{F} \left(L_a , \sigma \right) \right\}\right),}
		\end{equation}
		with $\mathscr{F} \left(L_a , \sigma \right)$ equals to
		\begin{equation} \label{allinone}
		\sum_{t=1}^N b_t\left(2^{-2c}\underbrace{\sum_{\beta_i,\beta_j \in \Psi}  \mathbb{E}_{H}\left[\exp\left\{\frac{-{|H(\beta_i-\beta_j)|}^2}{4 \sigma^2 \mathrm{sin}^2\theta_t}\right\}\right]}_{G(\Psi,\sigma^2,\theta_t,f_H(h))}\right)^{L_a},
		\end{equation}
		where $b_t = \frac{\theta_t-\theta_{t-1}}{\pi}$, $\theta_t$ is a set of quadrature angles satisfying $\theta_0=0$,  $\theta_N=\frac{\pi}{2}$ and $\theta_0 < \theta_1 < \cdots < \theta_{N} $. $N$ represents the number of $\theta$ values which enables the adjustment of accuracy. We define $L_a = L(n/k - a + 1)$ as the number of passes remaining at segment $a$. The expectation operator $\mathbb{E}_H[\cdot]$ is taken with respect to the probability density function $f_H(h)$ of the random channel coefficient $H$. The summation over $\beta_i, \beta_j \in \Psi$ enumerates all distinct pairs of constellation points in the input set $\Psi$ and is used to compute pairwise error contributions. The parameter $a$ is an indexing variable used in the product computation in \eqref{eq4}, iterating from $1$ to $n/k$.
	\end{lemma}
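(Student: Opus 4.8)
The plan is to derive \eqref{eq4} as a union bound over ML error events organized by the first segment at which a competing codeword diverges from the true one. First I would fix the transmitted message $\mathbf{M}$ and, for each $a\in\{1,\dots,n/k\}$, partition the incorrect messages $\mathbf{M}'$ into classes $\mathcal{E}_a$ whose first differing segment from $\mathbf{M}$ is segment $a$. Because the hash \eqref{eqhash} is sequential, every $\mathbf{M}'\in\mathcal{E}_a$ shares the spine values $\mathbf{s}_1,\dots,\mathbf{s}_{a-1}$ with $\mathbf{M}$ but produces a fresh spine value at position $a$ and thereafter, which under the RNG model is treated as an independent uniform draw; hence the symbols on which $\mathbf{M}$ and $\mathbf{M}'$ disagree are exactly those at positions $i\ge a$ across all $L$ passes, i.e. $L_a=L(n/k-a+1)$ disagreeing channel uses. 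A counting step gives $|\mathcal{E}_a|=(2^k-1)\,2^{n-ak}$.

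Next I would bound the pairwise error probability $\Pr[\,\|\mathbf{y}-\mathbf{x}'\|^2\le\|\mathbf{y}-\mathbf{x}\|^2\,]$ for a fixed $\mathbf{M}'\in\mathcal{E}_a$, conditioned on the channel realizations $\{h_{i,j}\}$ and the random constellation mapping. This is a Gaussian tail $Q\!\big(\sqrt{d^2/(2\sigma^2)}\big)$ with $d^2=\sum_{i\ge a}\sum_{j=1}^{L}|h_{i,j}(x_{i,j}-x'_{i,j})|^2$; applying Craig's representation $Q(u)=\frac1\pi\int_0^{\pi/2}\exp(-u^2/(2\sin^2\theta))\,d\theta$ turns the tail into an integral whose integrand factorizes over the $L_a$ independent symbol positions. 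Discretizing the integral at quadrature nodes $\theta_0<\dots<\theta_N$ with weights $b_t=(\theta_t-\theta_{t-1})/\pi$, then averaging over the independent uniform symbol pairs $\beta_i,\beta_j\in\Psi$ (each pair carrying probability $2^{-2c}$) and over $H\sim f_H$, produces exactly the per-node factor $G(\Psi,\sigma^2,\theta_t,f_H(h))$ raised to the power $L_a$, i.e. $\mathscr{F}(L_a,\sigma)$ of \eqref{allinone}.

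Then I would assemble the pieces. A union bound over $\mathcal{E}_a$ gives that the probability that some message in $\mathcal{E}_a$ outscores $\mathbf{M}$ is at most $(2^k-1)2^{n-ak}\,\mathscr{F}(L_a,\sigma)$; clipping this at $1$ is legitimate since it is a probability, so denote it $q_a$. Finally, invoking the layered decoding-tree structure of Spinal ML decoding established in \cite{chen2023tight,li2024new}—whereby the events ``the first divergence survives at level $a$'' are combined multiplicatively rather than additively, which is precisely what makes the resulting bound a tight approximation—the block-error complement is lower-bounded by $\prod_{a}(1-q_a)$, yielding $P_e\le 1-\prod_{a=1}^{n/k}(1-q_a)$, which is \eqref{eq4}.

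The main obstacle is this last step: justifying the product form $\prod_a(1-q_a)$ instead of the crude $\sum_a q_a$. The pairwise events at different divergence levels share the received vector $\mathbf{y}$ and are therefore not genuinely independent, so the product form rests on the ensemble-averaging argument and the tree decomposition of the Spinal decoder from the cited works. Since the lemma is explicitly a restatement, I would cite \cite{li2024new,chen2023tight} for that step and devote the written proof to steps one and two, where the $L_a$ bookkeeping and the Craig-formula manipulation that produces $G(\cdot)$ are the substantive computations.
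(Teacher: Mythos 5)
The paper offers no proof of this lemma---it is an explicit restatement whose proof is deferred to \cite{li2024new,chen2023tight}---and your sketch correctly reconstructs the standard derivation from those works: the decomposition of wrong messages by first divergent segment with the count $(2^k-1)2^{n-ak}$ and $L_a=L(n/k-a+1)$ disagreeing channel uses, the Craig-representation treatment of the pairwise error probability averaged over the uniform RNG outputs (weight $2^{-2c}$ per pair) and over $H$, and the clipped product assembly, together with an accurate identification of the product form as the one genuinely delicate step that rests on the conditional independence of the RNG outputs across divergence levels and an ensemble-averaging/association argument. The only detail worth adding is that the quadrature $\sum_{t} b_t(\cdot)\big|_{\theta_t}$ upper-bounds the Craig integral because the integrand $\exp\{-u^2/(2\sin^2\theta)\}$ is monotonically increasing in $\theta$ on $(0,\pi/2)$, so evaluating each subinterval at its right endpoint keeps the whole chain an upper bound rather than a mere approximation.
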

	\begin{figure}[t]
	\centering
	\includegraphics[width=1\linewidth]{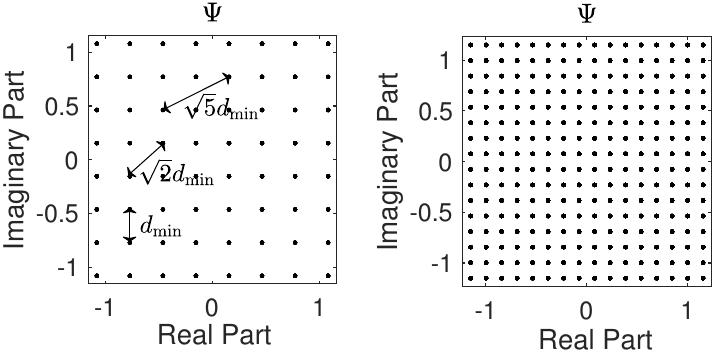}
	\caption{Examples of the constellation mapping set $\Psi$. The left panel is plotted under QAM modulation with $c=6$; The right panel is plotted under QAM modulation with $c=8$.}
	\label{fig:mapping}
	\vspace{-3mm}
\end{figure}
	Our error floor analysis is rooted in the key observation that $\mathscr{F}(L_a,0) > 0$. As a result, the error probability remains non-zero, regardless of the fact that SNR reaches infinity. Lemma \ref{l1} provides a mathematical demonstration of this phenomenon.
	
\begin{lemma}\label{l1}
	$\mathscr{F}(L_a,\sigma)$ is monotonically increasing in terms of $\sigma\ge0$, with the minimum value $\mathscr{F}(L_a,0)$ calculated by
	\begin{equation}
	\mathscr{F}(L_a,0)= 2^{-L_ac-1}.
	\end{equation}
\end{lemma}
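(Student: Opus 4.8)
The plan is to regard $\mathscr{F}(L_a,\sigma)$, as written in \eqref{allinone}, as a positive linear combination of $L_a$-th powers of expectations, and to push monotonicity in $\sigma$ through each operation that builds it up; the boundary value $\mathscr{F}(L_a,0)$ is then recovered as the limit $\sigma\to0^+$, which exists precisely because of that monotonicity.

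\textbf{Monotonicity.} Fix $t\in\{1,\dots,N\}$ and an ordered pair $(\beta_i,\beta_j)\in\Psi\times\Psi$. Since $\theta_0=0<\theta_1<\cdots<\theta_N=\frac{\pi}{2}$, we have $\theta_t\in(0,\frac{\pi}{2}]$, hence $\sin^2\theta_t>0$, so for every realization of $H$ the map $\sigma\mapsto\exp\{-|H(\beta_i-\beta_j)|^2/(4\sigma^2\sin^2\theta_t)\}$ is nondecreasing on $[0,\infty)$ with values in $[0,1]$. Applying $\mathbb{E}_H[\cdot]$, summing over the $|\Psi|^2$ ordered pairs, and multiplying by the positive constant $2^{-2c}$ each preserve monotonicity, so $g_t(\sigma):=2^{-2c}\,G(\Psi,\sigma^2,\theta_t,f_H(h))$ is nonnegative and nondecreasing in $\sigma$. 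As $L_a=L(n/k-a+1)$ is a positive integer and $g_t(\sigma)\ge0$, the map $\sigma\mapsto g_t(\sigma)^{L_a}$ is nondecreasing as well; finally $b_t=(\theta_t-\theta_{t-1})/\pi>0$, so $\mathscr{F}(L_a,\sigma)=\sum_{t=1}^N b_t\,g_t(\sigma)^{L_a}$ is a nonnegative combination of nondecreasing functions, hence nondecreasing (and in fact strictly increasing for $\sigma>0$, since the off-diagonal terms are). In particular $\inf_{\sigma\ge0}\mathscr{F}(L_a,\sigma)=\lim_{\sigma\to0^+}\mathscr{F}(L_a,\sigma)$.

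\textbf{The boundary value.} Next I would split the pairwise sum inside $G$ into its diagonal part ($\beta_i=\beta_j$) and off-diagonal part ($\beta_i\neq\beta_j$). For a diagonal pair the exponent is identically $0$, so the corresponding expectation equals $1$, and there are $|\Psi|$ such pairs. For an off-diagonal pair, $|H(\beta_i-\beta_j)|^2=|H|^2|\beta_i-\beta_j|^2>0$ almost surely, so the integrand decreases to $0$ as $\sigma\to0^+$ and, being bounded by $1$, its expectation tends to $0$. Hence $G(\Psi,\sigma^2,\theta_t,f_H(h))\to|\Psi|=2^c$, giving $g_t(\sigma)\to 2^{-2c}\cdot 2^c=2^{-c}$ and $g_t(\sigma)^{L_a}\to 2^{-L_ac}$ for every $t$. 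Since this limit is independent of $t$, $\mathscr{F}(L_a,\sigma)\to 2^{-L_ac}\sum_{t=1}^N b_t$, and the weights telescope: $\sum_{t=1}^N b_t=\frac{1}{\pi}\sum_{t=1}^N(\theta_t-\theta_{t-1})=\frac{\theta_N-\theta_0}{\pi}=\frac{1}{2}$. Therefore $\mathscr{F}(L_a,0)=2^{-L_ac}\cdot\frac{1}{2}=2^{-L_ac-1}$, as claimed.

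The sign bookkeeping in the monotonicity step and the telescoping identity are routine. The one point that needs genuine care — the main obstacle, modest though it is — is the interchange of limit and expectation as $\sigma\to0^+$: it must be justified by monotone (equivalently bounded) convergence, using that the integrand lies in $[0,1]$ and is monotone in $\sigma$, and it relies on $\Pr[H=0]=0$ so that the off-diagonal terms truly vanish rather than leaving a positive residual; this holds in the AWGN case ($H\equiv1$) and for the continuous fading laws (Rayleigh, Nakagami-$m$) considered here. The only structural fact used about the modulation set is $|\Psi|=2^c$ (as for the QAM constellations in Fig.~\ref{fig:mapping}), which enters solely through the count of diagonal terms.
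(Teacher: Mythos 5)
Your proposal is correct and follows essentially the same route as the paper's proof: split the pairwise sum in $G$ into diagonal pairs (contributing $|\Psi|=2^c$) and off-diagonal pairs (vanishing as $\sigma\to0^+$), then use $\sum_{t=1}^N b_t=\tfrac12$ to get $2^{-L_ac-1}$. You are in fact somewhat more careful than the paper, which dismisses monotonicity as trivial and does not explicitly justify the limit–expectation interchange or the need for $\Pr[H=0]=0$.
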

\begin{proof}
	See Appendix \ref{appendixa}.
\end{proof}


	The error floor observed in Spinal codes stems from their random coding structure. This randomness can cause collisions, where different seeds, $\mathbf{s}_i$ and $\mathbf{s}_i'$, produce identical outputs, leading to $\beta_i = \beta_j$ as illustrated in \eqref{eq8}. Crucially, these collisions are an intrinsic feature of the random coding mechanism in Spinal codes and cannot be resolved by merely increasing the SNR.
	
	\begin{theorem}\label{the2}
	A $(n,k,c,\Psi,L)$ Spinal codes will suffer from an error floor $P_{EF}$, which is calculated by
	\begin{equation}
	P_{EF}=1-\prod_{a=1}^{n/k} \left(1-\mathrm{min} \left\{ 1,\left(2^k-1\right)2^{n-ak-L_ac-1} \right\}\right).
	\end{equation}
\end{theorem}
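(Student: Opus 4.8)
The plan is to take the limit $\sigma \to 0$ in the tight upper bound of Lemma~\ref{the1} and identify the resulting constant with $P_{EF}$. The key enabling fact is Lemma~\ref{l1}, which gives both the monotonicity of $\mathscr{F}(L_a,\sigma)$ in $\sigma$ and the explicit limiting value $\mathscr{F}(L_a,0) = 2^{-L_a c - 1}$. Since the bound in \eqref{eq4} is a finite product of continuous functions of $\mathscr{F}(L_a,\sigma)$, and each $\mathscr{F}(L_a,\sigma)$ is monotone decreasing as $\sigma \downarrow 0$, the whole right-hand side of \eqref{eq4} is monotone decreasing in $\sigma$ and converges as $\sigma \to 0$. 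Substituting $\mathscr{F}(L_a,0) = 2^{-L_a c - 1}$ into the term $(2^k-1)2^{n-ak}\cdot \mathscr{F}(L_a,\sigma)$ yields $(2^k-1)2^{n-ak} \cdot 2^{-L_a c - 1} = (2^k-1)2^{n-ak-L_ac-1}$, which is exactly the argument of the $\min\{1,\cdot\}$ appearing in the statement of Theorem~\ref{the2}.

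The steps, in order, would be: (i) recall from Lemma~\ref{the1} that for every finite $\sigma > 0$ the BLER satisfies $P_e \le 1 - \prod_{a=1}^{n/k}\bigl(1 - \min\{1, (2^k-1)2^{n-ak}\mathscr{F}(L_a,\sigma)\}\bigr)$; (ii) invoke Lemma~\ref{l1} to assert $\lim_{\sigma\to 0^+}\mathscr{F}(L_a,\sigma) = \mathscr{F}(L_a,0) = 2^{-L_a c -1}$ for each $a$, together with monotonicity so the convergence is from above; (iii) pass the limit through the finite product and the $\min$ — here I would note that $x \mapsto \min\{1,x\}$ and the finite product $\prod_a(1-x_a)$ are continuous, so the limit of the bound equals the bound evaluated at the limiting values; (iv) conclude that $\lim_{\sigma \to 0^+} (\text{RHS of }\eqref{eq4}) = 1 - \prod_{a=1}^{n/k}\bigl(1 - \min\{1,(2^k-1)2^{n-ak-L_ac-1}\}\bigr) =: P_{EF}$; and (v) argue that because the RHS of \eqref{eq4} is monotone in $\sigma$ and the true BLER $P_e$ is also monotone nonincreasing in SNR (hence nondecreasing in $\sigma$), the limiting value $P_{EF}$ is a genuine floor that $P_e$ approaches but does not fall below as SNR $\to \infty$, i.e. $\sigma \to 0$.

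The main obstacle — or rather the main point requiring care — is the interpretation step (v) rather than any calculation: the bound in Lemma~\ref{the1} is an \emph{upper} bound on $P_e$, so strictly speaking taking $\sigma \to 0$ only shows $\limsup_{\sigma\to 0} P_e \le P_{EF}$, which would merely say the error probability is \emph{at most} $P_{EF}$ in the high-SNR limit, not that it saturates at $P_{EF}$. To legitimately call $P_{EF}$ an error \emph{floor} one must appeal to the established tightness of the bound (as claimed in Lemma~\ref{the1}, ``a tight upper bound,'' and validated numerically in the references \cite{li2024new,chen2023tight} and in this paper's numerical section), so that the bound is treated as an accurate approximation $P_e \approx P_{EF}$ rather than a loose inequality. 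I would therefore phrase the theorem's conclusion as an approximation consistent with the ``can be approximated by'' language of Lemma~\ref{the1}, and remark that the collision mechanism described after Lemma~\ref{l1} — distinct spine values producing $\beta_i = \beta_j$ with probability that does not vanish — is the structural reason the floor is real and not an artifact of bounding. The remaining steps are routine: substitution of the closed form from Lemma~\ref{l1} and continuity of the finite product and the $\min$ operation.
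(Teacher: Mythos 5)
Your proposal matches the paper's proof: the paper likewise obtains Theorem~\ref{the2} by substituting the limiting value $\mathscr{F}(L_a,0)=2^{-L_ac-1}$ from Lemma~\ref{l1} into the bound of Lemma~\ref{the1} at $\sigma=0$. Your step (v) caveat --- that the result is strictly an upper bound in the high-SNR limit and the ``floor'' interpretation rests on the claimed tightness of Lemma~\ref{the1} --- is a fair and careful observation that the paper's own two-line proof does not spell out, but it does not change the argument.
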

\begin{proof}
	Lemma \ref{l1} establishes that the function $\mathscr{F}(L_a, \sigma)$ is bounded below by $2^{-L_ac-1}$. This lower bound is attained when $\sigma = 0$, corresponding to the case where $\text{SNR} \rightarrow \infty$. Therefore, applying Lemma \ref{l1} in Lemma \ref{the1} yields the expression for the error floor of Spinal codes.
\end{proof}

\subsection{SNR Threshold of the Error Floor}
	Our analysis indicates that Spinal codes experience an error floor, where additional increases in SNR fail to lower the BLER. In practical applications, estimating the SNR threshold is crucial for efficient power allocation in Spinal codes. To address this need, we propose a method to approximate the SNR threshold for Spinal codes under QAM modulation\footnote{{We focus on QAM due to its analytical simplicity and wide adoption in practical systems. The proposed framework, however, is general for other modulations.}}. The main findings are presented in Theorem \ref{the3}.
	\begin{theorem}\label{the3}
		For an $(n,k,c,\Psi,L)$ Spinal codes where $c$ is a even, if the SNR $\gamma$ goes beyond a threshold with $\gamma\ge\gamma^{\text{th}}$, where $\gamma^{\text{th}}$ satisfies \begin{equation}\label{the3eq}
		4\cdot \mathbb{E}_{H}\left[\exp\left\{\frac{-{3|H|}^2\gamma^{\text{th}}}{2 (2^c-1) }\right\}\right]\ll 1,
		\end{equation}
		Then, the BLER of Spinal codes will suffer from an error floor.
	\end{theorem}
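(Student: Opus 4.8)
The plan is to show that once $\gamma\ge\gamma^{\text{th}}$ the tight upper bound of Lemma~\ref{the1} becomes numerically indistinguishable from the floor value $P_{EF}$ of Theorem~\ref{the2}. Since Lemma~\ref{l1} makes $\mathscr{F}(L_a,\sigma)$ nondecreasing in $\sigma$, the bound in \eqref{eq4} is itself monotone and already satisfies $P_e\ge P_{EF}$; so it suffices to establish the matching estimate $P_e\lesssim P_{EF}$ in this SNR range. As $P_e$ depends on $\sigma$ only through the quantities $\mathscr{F}(L_a,\sigma)$, the argument reduces to proving $\mathscr{F}(L_a,\sigma)\approx\mathscr{F}(L_a,0)=2^{-L_ac-1}$ for every segment index $a$ whenever $\gamma\ge\gamma^{\text{th}}$.

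First I would split the pair sum defining $G(\Psi,\sigma^2,\theta_t,f_H)$ into its diagonal part ($\beta_i=\beta_j$) and its off-diagonal part. The $2^c$ diagonal terms each equal $1$ regardless of $\theta_t$, so they contribute exactly $2^c$; after the prefactor $2^{-2c}$, the $L_a$-th power, and the telescoping sum $\sum_{t=1}^N b_t=(\theta_N-\theta_0)/\pi=\tfrac12$, this reproduces precisely $2^{-L_ac-1}$, that is, the floor of Lemma~\ref{l1}. It then remains to control the off-diagonal remainder $G_{\mathrm{off}}(\theta_t)=\sum_{\beta_i\ne\beta_j}\mathbb{E}_H\!\left[\exp\!\left\{-\tfrac{|H(\beta_i-\beta_j)|^2}{4\sigma^2\sin^2\theta_t}\right\}\right]$ and show that $2^{-2c}G_{\mathrm{off}}(\theta_t)\ll 2^{-c}$ uniformly in $t$.

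The core is a nearest-neighbor (minimum-distance) estimate. Since $\sin^2\theta_t\le 1$, every off-diagonal summand is dominated by its value at $\theta_t=\pi/2$, and since $|\beta_i-\beta_j|\ge d_{\min}$ it is further bounded by $\mathbb{E}_H[\exp(-|H|^2d_{\min}^2/(4\sigma^2))]$, which is tight only for the minimum-distance pairs (pairs at larger distance being exponentially more suppressed at high SNR). For square $2^c$-QAM — this is where the hypothesis ``$c$ even'' is used — the standard energy relation $E_s=\tfrac{2^c-1}{6}\,d_{\min}^2$ together with $\gamma=E_s/\sigma^2$ gives $d_{\min}^2/(4\sigma^2)=\tfrac{3\gamma}{2(2^c-1)}$, while the number of ordered minimum-distance pairs equals $4(2^c-2^{c/2})<4\cdot 2^c$. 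Collecting these facts yields, in the high-SNR regime,
\begin{equation}
2^{-2c}G_{\mathrm{off}}(\theta_t)\ \lesssim\ 4\cdot 2^{-c}\,\mathbb{E}_H\!\left[\exp\!\left\{\frac{-3|H|^2\gamma}{2(2^c-1)}\right\}\right],
\end{equation}
so that $2^{-2c}G(\Psi,\sigma^2,\theta_t,f_H)=2^{-c}(1+\varepsilon_t)$ with $\varepsilon_t\lesssim 4\,\mathbb{E}_H[\exp\{-3|H|^2\gamma/(2(2^c-1))\}]$.

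Under the hypothesis \eqref{the3eq} the correction $\varepsilon_t$ is $\ll1$ for every $t$, hence $\mathscr{F}(L_a,\sigma)=\sum_t b_t\,(2^{-2c}G)^{L_a}\approx\sum_t b_t\,2^{-L_ac}=2^{-L_ac-1}=\mathscr{F}(L_a,0)$ for all $a$; substituting into \eqref{eq4} gives $P_e\approx P_{EF}$, which, combined with $P_e\ge P_{EF}$, yields the claimed error floor for $\gamma\ge\gamma^{\text{th}}$. I expect the main obstacle to be the off-diagonal estimate — justifying that the pair sum is governed by its minimum-distance terms and pinning down both the constant $4$ and the exact exponent $\tfrac{3\gamma}{2(2^c-1)}$, which rests on the precise QAM energy-to-$d_{\min}$ relation and the nearest-neighbor pair count. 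The remaining steps (the diagonal bookkeeping, the identity $\sum_t b_t=\tfrac12$, and propagating the approximation through the finite products of \eqref{eq4}) are routine, especially as the statement is deliberately asymptotic in the ``$\ll$'' sense so that no explicit error bars are required.
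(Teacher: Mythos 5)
Your proposal is correct and follows essentially the same route as the paper's proof: split the pairwise sum by distance, recognize the diagonal $A_0=2^c$ term as the source of the floor, reduce the off-diagonal part to the $\approx 4\cdot 2^c$ minimum-distance pairs via the QAM relation $\gamma=(2^c-1)d_{\min}^2/(6\sigma^2)$, and impose the condition at $\theta_t=\pi/2$ where the correction is largest. The only differences are cosmetic (you compute the nearest-neighbor count $4(2^c-2^{c/2})$ exactly where the paper writes $A_{d_{\min}}\approx 4\cdot2^c$, and you make the two-sided approximation $P_{EF}\le P_e\lesssim P_{EF}$ explicit).
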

\begin{proof}
	See Appendix \ref{appendixb}.
\end{proof}
	
Based on Theorem \ref{the3}, we proceed to derive the SNR threshold for \textcolor{black}{Nakagami-m}, \textcolor{black}{Rayleigh}, and AWGN channels. These closed-form expressions are achieved by tailoring the distribution of the fading coefficient $H$ and explicitly calculating $\mathbb{E}_{H}\left[\exp\left\{\frac{-{3|H|}^2\gamma^{\text{th}}}{2 (2^c-1) }\right\}\right]$.
	\begin{corollary}\label{co1}
		(i) For \textcolor{black}{Nakagami-m fading channel with a normalized fading mean square} $\Omega=1$, the SNR threshold of the error floor is approximated by
		\begin{equation}
		\gamma^{\text{th}}\textcolor{black}{\approx}\frac{2m(2^c-1)}{3}\left(\textcolor{black}{\left(\frac{4}{x}\right)}^{1/m}-1\right).
		\end{equation}
		(ii) For \textcolor{black}{Rayleigh fading channel} with $\Omega=1$, the SNR threshold of the error floor is approximated by
		\begin{equation}
		\gamma^{\text{th}}\textcolor{black}{\approx}\frac{2(2^c-1)}{3}\left(\textcolor{black}{\left(\frac{4}{x}\right)}-1\right).
		\end{equation}
		(iii) For AWGN channel, the SNR threshold of the error floor is approximated by
		\begin{equation}
		\gamma^{\text{th}}\textcolor{black}{\approx}\frac{2(2^c-1)}{3}\ln\textcolor{black}{\left(\frac{4}{x}\right)}.
		\end{equation}
		where $x \ll 1$ is a small constant that can be chosen flexibly to adjust the precision of the approximation.
	\end{corollary}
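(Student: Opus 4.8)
The plan is to take the threshold condition of Theorem~\ref{the3} at face value and turn it into a solvable equation: replace the informal requirement $4\,\mathbb{E}_{H}\!\left[\exp\{-3|H|^2\gamma^{\text{th}}/(2(2^c-1))\}\right]\ll 1$ by the equality $4\,\mathbb{E}_{H}\!\left[\exp\{-3|H|^2\gamma^{\text{th}}/(2(2^c-1))\}\right]=x$ for a small tunable constant $x\ll1$, and then, for each of the three channel models, evaluate the remaining expectation in closed form and invert. Setting $s=\frac{3\gamma^{\text{th}}}{2(2^c-1)}$, the quantity to compute is $\mathbb{E}_{H}[e^{-s|H|^2}]$, i.e.\ the moment generating function (Laplace transform) of the nonnegative random variable $|H|^2$ evaluated at $-s$; the corollary then follows by solving $4\,\mathbb{E}_{H}[e^{-s|H|^2}]=x$ for $s$ and substituting back $\gamma^{\text{th}}=\frac{2(2^c-1)}{3}s$.

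For part~(i), I would first recall that under the Nakagami-$m$ amplitude model with normalized fading power $\Omega=\mathbb{E}[|H|^2]=1$, the squared magnitude $|H|^2$ is Gamma-distributed with shape parameter $m$ and rate parameter $m$, so its MGF is the standard $\mathbb{E}_{H}[e^{-s|H|^2}]=(1+s/m)^{-m}$. Substituting into $4(1+s/m)^{-m}=x$ yields $1+s/m=(4/x)^{1/m}$, hence $s=m\big((4/x)^{1/m}-1\big)$ and therefore $\gamma^{\text{th}}\approx\frac{2m(2^c-1)}{3}\big((4/x)^{1/m}-1\big)$, as claimed.

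Parts~(ii) and~(iii) then drop out as the $m=1$ and the $m\to\infty$ (no-fading) specializations. For Rayleigh fading, $|H|^2$ is exponential with unit mean, so $\mathbb{E}_{H}[e^{-s|H|^2}]=(1+s)^{-1}$, and $4(1+s)^{-1}=x$ gives $s=4/x-1$, i.e.\ the stated $\gamma^{\text{th}}$. For the AWGN channel there is no fading, $|H|^2\equiv1$, so $\mathbb{E}_{H}[e^{-s|H|^2}]=e^{-s}$, and $4e^{-s}=x$ gives $s=\ln(4/x)$, i.e.\ the stated logarithmic threshold. As a sanity check one can verify that the three expressions are nested: putting $m=1$ in (i) recovers (ii), while letting $m\to\infty$ in (i), using $(4/x)^{1/m}-1\sim\frac{1}{m}\ln(4/x)$, recovers (iii).

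None of the individual steps is hard --- they amount to three evaluations of a Laplace transform plus one algebraic inversion each. The only points that need care are (a) making explicit the passage from the informal ``$\ll1$'' of Theorem~\ref{the3} to the working equation ``$=x$'', which is exactly what forces the ``$\approx$'' in the statement and makes $x$ a free accuracy knob, and (b) confirming that the fading normalization $\Omega=1$ is the one consistent with the SNR $\gamma$ used throughout, so that the coefficient $\frac{3\gamma^{\text{th}}}{2(2^c-1)}$ inherited from Theorem~\ref{the3} is correct as written. If one wished to be fully rigorous about (a), one could instead retain the inequality and obtain a one-sided bound on $\gamma^{\text{th}}$; the equality formulation with tunable $x$ is simply the more convenient engineering statement.
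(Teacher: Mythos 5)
Your proposal is correct and follows essentially the same route as the paper: set the left-hand side of the threshold condition in Theorem~\ref{the3} equal to a small constant $x$, evaluate $\mathbb{E}_{H}[e^{-s|H|^2}]$ in closed form for each channel, and invert for $\gamma^{\text{th}}$. The only cosmetic difference is that you invoke the Gamma/exponential moment generating function directly, whereas the paper carries out the same integrals explicitly via the substitution $q=(s/m+1)r^2$ (obtaining $\left(\frac{2m(2^c-1)}{3\gamma^{\text{th}}+2m(2^c-1)}\right)^m$, which is exactly your $(1+s/m)^{-m}$).
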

\begin{proof}
	See Appendix \ref{appendixc}.
\end{proof}
\begin{figure}
	\centering	\includegraphics[width=1\linewidth]{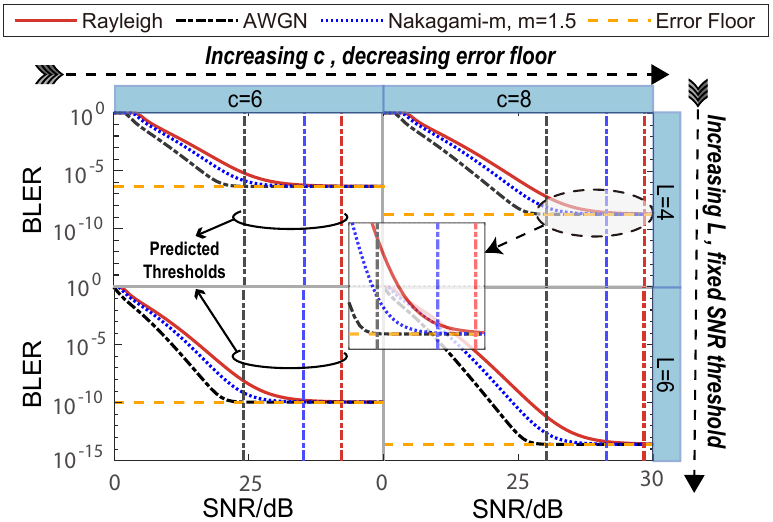}
	\caption{\textcolor{black}{BLER curves of Spinal codes where $n=32$ and $k=4$.}}
	\label{fig:simulation}
\end{figure}
\begin{figure}
	\centering	\includegraphics[width=1\linewidth]{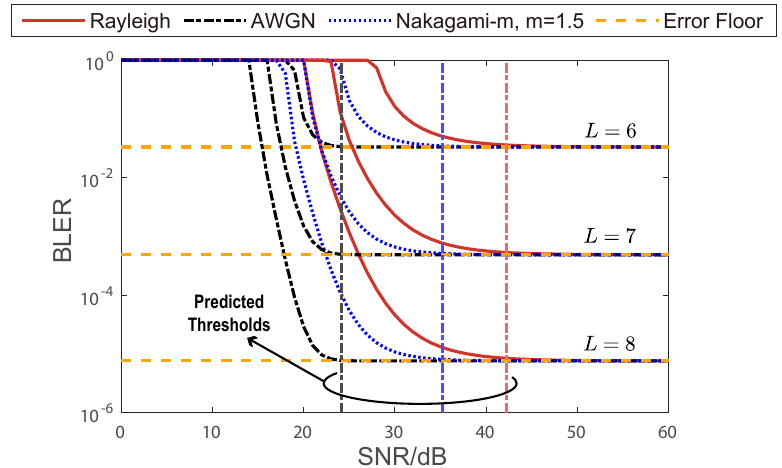}
	\caption{\textcolor{black}{BLER curves of Spinal codes where $n=256$, $k=32$, and $c=6$.}}
	\label{fig4:simulation}
\end{figure}
	\section{Numerical Verification}
Fig. \ref{fig:simulation} illustrates the error curves, error floor, and SNR thresholds over the AWGN, \textcolor{black}{Rayleigh}, and \textcolor{black}{Nakagami-m} fading channels with parameter $m=1.5$.

\subsection{Error Floor}
  From Fig. \ref{fig:simulation}, we observe that Spinal codes exhibit a consistent error floor across various channel types. This consistency indicates that the error floor is primarily attributed to the inherent random coding structure and the potential for collisions\footnote{\textcolor{black}{A collision occurs when distinct messages are encoded into the same codeword. For example, if $\mathbf{M}_1$ and $\mathbf{M}_2$ are two distinct messages such that their encoded codewords satisfy $\mathbf{X} = f(\mathbf{M}_1) = f(\mathbf{M}_2)$, where $f(\cdot)$ is the encoding function, then a collision has occurred. In random coding schemes like Spinal codes, the codeword $\mathbf{X}$ is generated randomly, resulting in a non-zero probability of collision and thus the error floor.}} within Spinal codes, rather than to external channel characteristics. The orange horizontal dashed line in each graph, as predicted by Theorem \ref{the2}, validates our theoretical analysis, demonstrating that the error floor is a fundamental limitation of Spinal codes, irrespective of the channel type.
 
 Additionally, as depicted in Fig. \ref{fig:simulation}, the error floor of Spinal codes decreases with increasing modulation order $c$ and the number of transmitted passes $L$, consistent with the predictions of Theorem \ref{the2}. However, since Spinal codes are \textit{rateless}, the number of transmitted passes $L$ continues to increase until successful decoding is achieved. Therefore, the modulation order parameter $c$ is an important variable that may help balance the trade-off between the error floor and the coding complexity of Spinal codes.
 
 \subsection{SNR Threshold}
 
 From Fig.~\ref{fig4:simulation}, we observe that the predicted SNR thresholds consistently mark the onset of the error floor across different types of fading channels and various parameter configurations. A comparative analysis of the first and second rows of subfigures clearly shows that the SNR thresholds are \textit{invariant} with respect to the number of transmitted passes, $L$, and are instead predominantly influenced by the modulation order, $c$, and the channel type. This behavior is in agreement with the theoretical insights established in Corollary~\ref{co1}. This observation highlights a key insight: for Spinal codes, increasing the number of symbol passes can effectively lower the error floor but does not shift the SNR threshold at which it begins. 
 
	\section{Conclusion}\label{sectionVI}In this correspondence, we studied the error floor in Spinal codes and determined that their random coding structure is the underlying cause. We derived explicit expressions for the error floor associated with Spinal codes and calculated approximate SNR thresholds under QAM modulation, indicating the onset of the error floor across \textcolor{black}{Rayleigh}, \textcolor{black}{Nakagami-m}, and AWGN channels. Our numerical results corroborated these findings. In addition, the analysis and numerical results revealed that, while transmitting additional passes of symbols can reduce the error floor, it cannot change the SNR threshold. This provides insights into the SNR working region and informs the design and transmission practices of codes. A promising direction for future work lies in developing collision-mitigation mechanisms that retain the rateless and scalable nature of Spinal codes. Such efforts may require new theoretical tools to model and control the trade-offs between encoding randomness, collision probability, and decoding complexity.

	\appendices
	\section{Proof of Lemma \ref{l1}}\label{appendixa}
	\begin{proof}
		The monotonically increasing property is trivial. Our focus is on solving $\mathscr{F}(L_a,0)$, which is equivalent to addressing $G\left(\Psi,0,\theta_t,f_H(h)\right)$ as specified in \eqref{allinone}. To facilitate this, we first rewrite the function $G\left(\Psi,\sigma^2,\theta_t,f_H(h)\right)$ as
		\begin{equation}
		\begin{aligned}
		\underset{\substack{\beta_i,\beta_j\in\Psi\\\beta_i=\beta_j}}{\sum}  \mathbb{E}_{H}\left[\exp\left\{\frac{-{|H(\beta_i-\beta_j)|}^2}{4 \sigma^2 \mathrm{sin}^2\theta_t}\right\}\right]+\\\underset{\substack{\beta_i,\beta_j\in\Psi\\\beta_i\ne\beta_j}}{\sum}  \mathbb{E}_{H}\left[\exp\left\{\frac{-{|H(\beta_i-\beta_j)|}^2}{4 \sigma^2 \mathrm{sin}^2\theta_t}\right\}\right],
		\end{aligned}
		\end{equation}
		where the first term satisfies \begin{equation}\label{eq8}
		\underset{\substack{\beta_i,\beta_j\in\Psi\\\beta_i=\beta_j}}{\sum}  \mathbb{E}_{H}\left[\exp\left\{\frac{-{|H(\beta_i-\beta_j)|}^2}{4 \sigma^2 \mathrm{sin}^2\theta_t}\right\}\right]=\underset{\substack{\beta_i,\beta_j\in\Psi\\\beta_i=\beta_j}}{\sum}1= 2^c,
		\end{equation} and the second term satisfies
		\begin{equation}
		\lim\limits_{\sigma\rightarrow 0}\underset{\substack{\beta_i,\beta_j\in\Psi\\\beta_i\ne\beta_j}}{\sum}  \mathbb{E}_{H}\left[\exp\left\{\frac{-{|H(\beta_i-\beta_j)|}^2}{4 \sigma^2 \mathrm{sin}^2\theta_t}\right\}\right]=0.
		\end{equation}
		As such, we have that $G\left(\Psi,0,\theta_t,f_H(h)\right)=2^c$. Applying it in \eqref{allinone} yields 
		\begin{equation}
		\mathscr{F} \left(L_a , 0\right)=\sum_{t=1}^N b_t \cdot 2^{-cL_a}=2^{-L_ac-1}.
		\end{equation}\vspace{-4mm}
	\end{proof} 
	
	\section{Proof of Theorem \ref{the3}}\label{appendixb}
	\begin{proof}
		For QAM modulation with even modulation order $c$, the average SNR can be calculated by
		\begin{equation}\label{eq12}
		\gamma=\frac{(2^c-1)\cdot d_{\min}^2}{6\sigma^2},
		\end{equation} 
		where $d_{\min}$ is the minimum distance of the QAM symbols. Substituting \eqref{eq12} into $G(\Psi,\sigma^2,\theta_t,f_H(h))$ yields \begin{equation}\label{eq13}
		\sum_{\beta_i,\beta_j \in \Psi}  \mathbb{E}_{H}\left[\exp\left\{\frac{-{3|H(\beta_i-\beta_j)|}^2\gamma}{2 (2^c-1)d_{\min}^2 \mathrm{sin}^2\theta_t}\right\}\right].
		\end{equation}
		On the other hand, the distance between a pair of QAM symbols forms a finite set, represented as $
		\mathcal{S}=\{|\beta_i-\beta_j|:\beta_i,\beta_j\in\Psi\}
		$. From Fig. \ref{fig:mapping}, we can numerate the above set as $\mathcal{S}=\{0,d_{\min},\sqrt{2}d_{\min},2d_{\min},\cdots,\sqrt{2}(2^{c/2}-1)d_{\min}\}$.
		With set $\mathcal{S}$, we can rewrite \eqref{eq13} as
		\begin{equation}\label{eq15}\begin{aligned}
		&\sum_{d\in\mathcal{S}}\underset{\substack{\beta_i,\beta_j\in\Psi\\|\beta_i-\beta_j|=d}}{\sum}  \mathbb{E}_{H}\left[\exp\left\{\frac{-{3|H(\beta_i-\beta_j)|}^2\gamma}{2 (2^c-1)d_{\min}^2 \mathrm{sin}^2\theta_t}\right\}\right]\\&=\sum_{d\in\mathcal{S}} A_d \mathbb{E}_{H}\left[\exp\left\{\frac{-{3|H|}^2d^2\gamma}{2 (2^c-1)d_{\min}^2 \mathrm{sin}^2\theta_t}\right\}\right],\\
		\end{aligned}
		\end{equation}
		where $A_d$ is the number of element pairs \((\beta_i, \beta_j)\) within \( \Psi \) that are exactly \( d \) units apart, with \begin{equation}
		A_d=|\{(\beta_i,\beta_j):|\beta_i-\beta_j|=d,\beta_i,\beta_j\in\Psi\}|.
		\end{equation}
		Denote $g(d,\gamma)=\mathbb{E}_{H}\left[\exp\left\{\frac{-{3|H|}^2d^2\gamma}{2 (2^c-1)d_{\min}^2 \mathrm{sin}^2\theta_t}\right\}\right]$, and \eqref{eq15} is reduced to $\sum_{d\in\mathcal{S}}A_d g(d,\gamma)$, which can be expanded by:
		\begin{equation}\label{eq17}	A_0+A_{d_{\min}}g(d_{\min},\gamma)+A_{\sqrt{2}d_{\min}}g(\sqrt{2}d_{\min},\gamma)\cdots.	\end{equation} 
		Because $g(nd,\gamma)=g(d,\gamma)^{n^2}$ and $g(d,\gamma)\ll1$ under high SNR $\gamma$, the tail of \eqref{eq17} is a higher-order infinitesimal relative to $g(d_{\min},\gamma)$. Therefore, \eqref{eq17} is simplified as
		\begin{equation}\label{eq18}
		A_0+A_{d_{\min}}g(d_{\min},\gamma)+o(g(d_{\min},\gamma)).	\end{equation} 
		From the proof of Lemma \ref{l1}, we know that 
		\begin{equation}
		\lim_{\gamma\rightarrow \infty} A_0+A_{d_{\min}}g(d_{\min},\gamma)+o(g(d_{\min},\gamma))=A_0,
		\end{equation}
		Thus, the SNR threshold can be approximated by $\gamma^{\text{th}}$, such that $A_{d_{\min}}g(d_{\min},\gamma^{\text{th}})\ll A_0$, or equivalently, \begin{equation}\label{eq21}
		\frac{A_{d_{\min}}g(d_{\min},\gamma^{\text{th}})}{A_0}\ll 1.
		\end{equation} 
		Note that $A_0=2^c$ and $A_{d_{\min}}\approx4\cdot 2^c$, substituting $g(d_{\min},\gamma^{\text{th}})=\mathbb{E}_{H}\left[\exp\left\{\frac{-{3|H|}^2\gamma^{\text{th}}}{2 (2^c-1) \mathrm{sin}^2\theta_t}\right\}\right]$ into \eqref{eq21} yields \begin{equation}\label{lhs}
		4\cdot \mathbb{E}_{H}\left[\exp\left\{\frac{-{3|H|}^2\gamma^{\text{th}}}{2 (2^c-1) \mathrm{sin}^2\theta_t}\right\}\right]\ll 1.
		\end{equation}
		Without loss of generality, setting $\theta = \pi/2$ allows us to impose a stricter condition, thus completing the proof.
	\end{proof}
\section{Proof of Corollary \ref{co1}}\label{appendixc}
\begin{proof}
	The expectation on the left-hand side of \eqref{the3eq} over different channels are:
	
	\noindent (i) \ul{\textit{Nakagami-m}}: \textcolor{black}{In this case, the Probability Density Function (PDF) of the modulus of $h$ is $f_{|H|}(r)=\frac{2 m^m}{\Gamma(m)} \cdot r^{2 m-1} \cdot \exp\{{-m r^2}\}$. Substituting it into the expectation yields:	}
	\begin{equation}\color{black}
    \begin{aligned}
        &\mathbb{E}_{H}\left[\exp\left\{\frac{-{3|H|}^2\gamma^{\text{th}}}{2 (2^c-1) }\right\}\right]=\mathbb{E}_{|H|}\left[\exp\left\{\frac{-{3|H|}^2\gamma^{\text{th}}}{2 (2^c-1) }\right\}\right]\\
            &=\int_0^\infty\exp\left\{\frac{-{3r}^2\gamma^{\text{th}}}{2 (2^c-1) }\right\}\cdot\frac{2 m^m r^{2 m-1}}{\Gamma(m)} \exp\left\{{-m r^2}\right\}\mathrm{d}r.
    \end{aligned}	
	\end{equation}
    \textcolor{black}{By implementing the substitution $q=\left(\frac{{3}\gamma^{\text{th}}}{2 (2^c-1) }+m\right)r^2$, the above improper integral turns to: }
    \begin{equation}\color{black}
    \begin{aligned}
           &\frac{m^m}{\left(\frac{{3}\gamma^{\text{th}}}{2 (2^c-1) }+m\right)^m\Gamma(m)}\underbrace{\int_0^\infty e^{-q}\cdot q^{m-1}}_{\Gamma(m)} \mathrm{d}q  \\
           &=        \left(\frac{2m(2^c-1)}{3\gamma^{\text{th}}+2m(2^c-1)}\right)^m.
    \end{aligned}
    \end{equation}
	(ii)\ul{\textit{ {Rayleigh}}}: \textcolor{black}{In this case, the PDF of the modulus of $h_{i,j}$ is $f_{|H|}(r)=2r\cdot\exp\{-r^2\}$. We have:}
	\begin{equation}\color{black}
    \begin{aligned}
    	&\mathbb{E}_{H}\left[\exp\left\{\frac{-{3|H|}^2\gamma^{\text{th}}}{2 (2^c-1) }\right\}\right]=\mathbb{E}_{|H|}\left[\exp\left\{\frac{-{3|H|}^2\gamma^{\text{th}}}{2 (2^c-1) }\right\}\right]\\&=\int_0^\infty\exp\left\{\frac{-{3r}^2\gamma^{\text{th}}}{2 (2^c-1) }\right\}\cdot{2r}\cdot\exp\{-r^2\}~\mathrm{d}r.
    \end{aligned}
	\end{equation}
    \textcolor{black}{By implementing the substitution $q=\left(\frac{{3}\gamma^{\text{th}}}{2 (2^c-1) }+1\right)r^2$, the above improper integral turns to: }
    \begin{equation}\color{black}
        \frac{1}{\frac{{3}\gamma^{\text{th}}}{2 (2^c-1) }+1}\int_{0}^\infty e^{-q}\mathrm{d}q=\frac{2(2^c-1)}{3\gamma^{\text{th}}+2(2^c-1)}.
    \end{equation}
(iii) \ul{\textit{AWGN}}: \textcolor{black}{In this case, $|H|\equiv1$ holds, and we have:	}
	\begin{equation}
    \begin{aligned}
        \mathbb{E}_{H}\left[\exp\left\{\frac{-{3|H|}^2\gamma^{\text{th}}}{2 (2^c-1)}\right\}\right]=\exp\left\{\frac{-{3}\gamma^{\text{th}}}{2 (2^c-1)} \right\}.
    \end{aligned}
	\end{equation}
	Then, by setting the left-hand side of \eqref{the3eq} as $x$ and explicitly solving $\gamma^{\text{th}}$, we can obtain the threshold $\gamma^{\text{th}}$ over the \textcolor{black}{nagakami-m}, \textcolor{black}{Rayleigh}, and AWGN channel, respectively.
\end{proof}
	\bibliographystyle{IEEEtran}
	\bibliography{reference}

\end{document}